\documentclass[11pt]{article}
\usepackage{amsmath, amssymb, amsthm, mathtools}
\usepackage[citestyle = authoryear, natbib = true]{biblatex}

\usepackage[bookmarks, bookmarksnumbered, linktocpage, urlcolor = cyan]{hyperref}
\usepackage{cleveref}
\bibliography{biblio}

\newtheorem{remark}{Remark}
\newtheorem{theorem}{Theorem}

\title{Target Weight Mechanism doesn't make delta hedge easier}

\author{Ruichao Jiang \and Long Wen}

\allowdisplaybreaks

\begin{document}
\maketitle
    \begin{abstract}
        \citet{chitra} claim that Target Weight Mechanism (TWM) in Perpetual Demand Lending Pools (PDLPs) can lower the delta of the portfolio under certain condition. We prove that their condition is self-contradictory. Furthermore, we prove an impossibility result that no TWM can lower the delta uniformly.
    \end{abstract}
    \section{Introduction} \label{sec:intro}
A PDLP is a multi-asset pool where traders can borrow  to open positions on an associated perpetual futures exchange and Liquidity Providers (LPs) to the PDLP earn fees paid by borrowers \citep{chitra}. The protocol specifies a \emph{target} weight for the pool and uses TWM to move the weight to the target: Reward/penalty on minting or redeeming pool shares that moves the weight closer/farther to the target so that \emph{arbitrageurs} can earn step in. Jupiter's JLP and Hyperliquid's HLP are two examples of PDLP.

\citet{chitra} claim that under certain condition, TWM can lower the delta (gradient of the value function with respect to price) of the whole pool, which makes delta hedging easier.

In this article, we prove that the condition for delta reduction in \citet{chitra} is self-contradictory, which automatically nullifies their proof. We remark that our refutation is purely mathematical and doesn't make use of any additional properties provided by TWM. Furthermore, we prove that a uniform reduction of delta is impossible for any TWM with very mild economical assumption.

The organization of this article is as follows. We fix the notation in \Cref{sec:background}. We prove that the condition for delta reduction in \citet{chitra} is self-contradictory in \Cref{sec:refute}. We prove our impossibility result in \Cref{sec:uniform}. We conclude in \Cref{sec:conclusion}
    \section{Background} \label{sec:background}
We fix the notation.

There are $n$ assets with strictly positive oracle prices $p\in\mathbb{R}^n_{++}$ and on-chain reserves $R\in\mathbb{R}^n_{+}$. Denote by $\ell$ the lent asset. Borrowers pay a lending fee at rate $f\in(0,1)$ hence the income to LPs is $f\cdot p$.

The \emph{mark-to-market} weight vector is defined by
\begin{equation*}
    w\coloneqq\frac{p\odot R}{p\cdot R},
\end{equation*}
where $\odot$ is the Hadamard product, i.e. $w_i=\frac{p_iR_i}{p\cdot R}$. Denote by $w^\star$ the target weight.

The TWM is as follows. The protocol dilutes or concentrates the value of the pool via a \emph{discount function} $F:\mathbb{R}^n_{++}\times\Delta\times\mathbb{R}^n_{+}\times\mathbb{R}^n\to\mathbb{R}$, where $\Delta$ is the $n$-dimensional simplex, as follows
\begin{equation} \label{eq:Vnew-bg}
    V_{\text{new}}(p)=\frac{p\cdot R}{1+F(p)}+f\,p\cdot\ell,
\end{equation}
$F$ is assumed to satisfy the following conditions \citep{chitra}, where $\delta\in\mathbb{R}^n$ denotes the reserve change tendered by the arbitrageur:
\begin{enumerate}
    \item[\textbf{(C1)}] \begin{equation} \label{eq:C1}
        F(p,w^\star,R,0)=0
    \end{equation}
    for all $p$, $w^\star$, and $R$.
    \item[\textbf{(C2)}] $F$ is concave in~$\delta$.
    \item[\textbf{(C3)}] 
    \begin{equation*}
        \arg\max_\delta F(p,w^\star,R,\delta)
    \end{equation*}
    achieves the target weight $w^\star$.
\end{enumerate}
\begin{remark}
    It would be a better notation if $w^\star$ is moved to the subscript to make it a parameter of $F$. Also, $F$ as a multiplier for the pool value, it should be a state function. But the explicit dependence on the process variable $\delta$ makes it extremely confusing. We guess \citet{chitra} might want to mean if there's no change of the reserve, the discount $F$ remains unchanged.
\end{remark}
    \section{Refutation} \label{sec:refute}
In this section, we prove that the condition of Claim 3.4 in \citet{chitra} is self-contradictory.

The condition of Claim 3.4 in \citet{chitra} is
\begin{enumerate}
    \item $\nabla_pF\geq\frac{8fR}{p\cdot R}$ (component-wise),
    \item $F\leq1$,
\end{enumerate}
where $f \in (0,1)$ is the lending fee in the PDLP model.

\begin{theorem}
\label{clm:incompatible}
    The above condition is self-contradictory.
\end{theorem}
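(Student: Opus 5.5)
The plan is to show that the two conditions cannot hold simultaneously on all of $\mathbb{R}^n_{++}$ (with $w^\star$, $R$, $\delta$ held fixed and $R\neq 0$, so that $p\cdot R>0$ and the right-hand side of condition 1 is well defined). I would use the scaling structure of the gradient bound. Fix any $p\in\mathbb{R}^n_{++}$ and look at $F$ along the ray $t\mapsto tp$, $t\geq 1$; this ray stays in $\mathbb{R}^n_{++}$. Set $g(t)\coloneqq F(tp)$.

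The key step is the chain rule. Taking the inner product of condition 1 with the positive vector $p$ is legitimate for a componentwise inequality, and gives
\begin{equation*}
    g'(t)=p\cdot\nabla_pF(tp)\;\geq\;p\cdot\frac{8fR}{(tp)\cdot R}=\frac{8f}{t}.
\end{equation*}
Integrating from $1$ to $t$ then yields
\begin{equation*}
    F(tp)\geq F(p)+8f\log t .
\end{equation*}
Since $f>0$, the right-hand side tends to $+\infty$ as $t\to\infty$. Hence $F(tp)>1$ for all sufficiently large $t$, contradicting condition 2. This uses only $f\in(0,1)$, positivity of $p$, and $R\in\mathbb{R}^n_+\setminus\{0\}$.

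I expect the main obstacle to be interpretive rather than computational. One must make sure that the two conditions of Claim 3.4 are meant to hold for all $p$, or at least along an unbounded ray, and that $F$ is differentiable in $p$ so that the chain rule applies. If the claim is read only locally, I would fall back on (C1) instead. At $\delta=0$, $F(p,w^\star,R,0)=0$ for all $p$, so $\nabla_pF=0$ there. But $\frac{8fR}{p\cdot R}$ has a strictly positive component whenever $R\neq0$, so condition 1 already fails at $\delta=0$.

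I would present the ray argument as the main proof, since it uses both stated conditions and no property of the TWM. The (C1) observation would follow as a remark covering the $\delta=0$ case.
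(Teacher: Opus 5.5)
Your proof is correct and is essentially the paper's argument. The paper integrates the gradient bound along a segment from $p_1$ to $p_2>p_1$, obtains $F(p_2)-F(p_1)\geq 8f\ln\frac{p_2\cdot R}{p_1\cdot R}$, and sends $p_2$ to infinity; your ray is the special case $p_1=p$, $p_2=tp$, and it has the small advantage that $p_2\cdot R=t\,(p\cdot R)\to\infty$ holds automatically, whereas the paper's ``$\|p_2\|\to\infty$'' implicitly needs $p_2$ to grow in coordinates where $R_i>0$.
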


\begin{proof}
    Choose $p_1<p_2$ (component-wise). Let $\mathcal{C}$ be the interval joining $p_1$ and $p_2$. Parametrize $\mathcal{C}$ by
    \begin{equation*}
        p(t)=p_1+(p_2-p_1)t,
    \end{equation*}
    $t\in[0,1]$.
    
    Then
    \begin{align*}
        &\int_{\mathcal{C}}\nabla_pF\cdot dp-\int_{\mathcal{C}}\frac{8fR}{p\cdot R}\cdot dp\\
        =&\sum_i\int_0^1\left(\nabla_pF-\frac{8fR}{p\cdot R}\right)_i\dot{p}_i(t)dt\\
        =&\sum_i\int_0^1\left(\nabla_pF-\frac{8fR}{p\cdot R}\right)_i(p_2-p_1)_idt\\
        \geq&0.
    \end{align*}
    On the other hand,
    \begin{equation*}
         \int_{\mathcal{C}}\nabla_pF\cdot dp=F(p_2)-F(p_1),
    \end{equation*}
    and
    \begin{equation*}
        \int_{\mathcal{C}}\frac{8fR}{p\cdot R}\cdot dp=8f\ln\frac{p_2\cdot R}{p_1\cdot R}.
    \end{equation*}
    Hence,
    \begin{equation*}
        F(p_2)-F(p_1)\geq8f\ln\frac{p_2\cdot R}{p_1\cdot R}.
    \end{equation*}
    Since $F\leq1$,
    \begin{equation*}
        1-F(p_1)\geq8f\ln\frac{p_2^TR}{p_1^TR}.
    \end{equation*}
    Let $||p_2||\to\infty$, the RHS of the above would be unbounded whereas the LHS would remain bounded, a contradiction.
\end{proof}
\begin{remark}
    The above proof didn't use any economical axioms of the discount function $F$. The refutation is purely a mathematical consequence.
\end{remark}
    \section{No uniform delta shrinkage} \label{sec:uniform}
We prove that under a very mild economical assumption, no TWM can achieve a uniform delta shrinkage. The assumption is as follows.
\begin{equation}\label{eq:Vprime-homog1}
    V_{\text{new}}(\lambda p)=\lambda V_{\text{new}}(p)
\end{equation}
for all $\lambda>0$.

\Cref{eq:Vprime-homog1} says that if the value of the pool under TWM is $1$ dollar, it's also $100$ cents.

Then \Cref{eq:Vnew-bg} implies that
\begin{equation*}
    F(\lambda p)=F(p)
\end{equation*}
for all $\lambda>0$, i.e. $F$ is homogeneous of degree zero in $p$. By Euler's theorem,
\begin{equation} \label{eqn:euler}
    \nabla_p F\cdot p=0
\end{equation}
for all $p\in\mathbb{R}^n_{++}$.
\begin{theorem}\label{thm:uniform-impossible}
    If the value function $V_\text{new}$ under the TWM respects change of numeraire, then there doesn't exist $0<c<1$ s.t.
    \begin{equation*}
        \nabla_p V_{\text{new}}\leq c\nabla_p V_{\text{old}},
    \end{equation*}
    where $\nabla_p V_{\text{old}}=R$.
\end{theorem}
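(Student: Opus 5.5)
The plan is to pair the hypothesised componentwise inequality with the price vector $p$ itself and use Euler's theorem for degree-one homogeneous functions. Suppose, for contradiction, that some $0<c<1$ satisfies $\nabla_p V_{\text{new}}\leq c\nabla_p V_{\text{old}}=cR$ componentwise. Since $p\in\mathbb{R}^n_{++}$ has strictly positive entries, taking the inner product with $p$ preserves the inequality:
\begin{equation*}
    p\cdot\nabla_p V_{\text{new}}(p)\leq c\,p\cdot R .
\end{equation*}

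Next I evaluate the left-hand side. By the assumption \eqref{eq:Vprime-homog1}, $V_{\text{new}}$ is homogeneous of degree one in $p$, so Euler's theorem gives $p\cdot\nabla_pV_{\text{new}}(p)=V_{\text{new}}(p)$. As a cross-check, differentiate \eqref{eq:Vnew-bg} directly:
\begin{equation*}
    \nabla_pV_{\text{new}}=\frac{R}{1+F}-\frac{(p\cdot R)\nabla_pF}{(1+F)^2}+f\ell .
\end{equation*}
Dotting with $p$ and using \eqref{eqn:euler} removes the middle term, and again yields $V_{\text{new}}(p)=\frac{p\cdot R}{1+F(p)}+f\,p\cdot\ell$. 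The inequality therefore becomes
\begin{equation*}
    \frac{p\cdot R}{1+F(p)}+f\,p\cdot\ell\leq c\,p\cdot R .
\end{equation*}

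To reach a contradiction I need a point where $F$ is not too large, and (C1) supplies this. In the state with no reserve change ($\delta=0$), \eqref{eq:C1} gives $F=0$ for every $p$. There the inequality reads
\begin{equation*}
    p\cdot R+f\,p\cdot\ell\leq c\,p\cdot R .
\end{equation*}
The lent amount satisfies $\ell\geq 0$ and $f>0$, so $f\,p\cdot\ell\geq0$. Hence $(1-c)\,p\cdot R\leq 0$. Since $c<1$ and $p\cdot R>0$ (take nonzero reserves), this is impossible.

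The main obstacle is interpretive rather than computational. The statement only makes sense once we fix at which $(w^\star,R,\delta)$ the gradient is taken; I would evaluate at $\delta=0$, where (C1) pins $F\equiv0$. If instead one must work at a general state, the fallback is to note that $F\leq 0$ cannot be excluded. In that case I would use (C3) together with (C1): the maximiser of $F$ attains the target weight and $F=0$ at $\delta=0$, and I would try to show that the relevant evaluation point has $F$ small enough that $\frac{1}{1+F}>c$. I would also state explicitly the harmless assumptions $\ell\ge0$ and $R\neq0$.
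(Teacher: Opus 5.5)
Your proof is correct and is essentially the paper's argument: pair the componentwise inequality with $p\in\mathbb{R}^n_{++}$, use Euler's relation to reduce the left side to $\frac{p\cdot R}{1+F}+f\,p\cdot\ell$, drop the nonnegative lending term, and contradict (C1) at a point where $F=0$. Substituting $F=0$ directly is slightly cleaner than the paper's step $1+F\ge 1/c$, which tacitly needs $1+F>0$. Your fallback paragraph is unnecessary, and its wording is off: in a general state it is a large positive $F$, not $F\le 0$, that would block the contradiction.
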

\begin{proof}
    Recall from \citet{chitra} that
    \begin{equation*}\label{eq:grad-Vnew}
    \nabla_p V_{\text{new}}
    = \frac{R}{1+F} - \frac{S}{(1+F)^2}\,\nabla_p F + f\ell.
    \end{equation*}
    Taking the inner product of the above and $p$ and using \Cref{eqn:euler},
    \begin{equation*}
        \nabla_pV_{\text{new}}\cdot p = \frac{p\cdot R}{1+F}+\,f\,p\cdot\ell.
    \end{equation*}
    Suppose for contradiction that such $c$ existed. It would follow that
    \begin{equation}\label{eq:ineq-SF}
        \frac{p\cdot R}{1+F}+fp\cdot\ell\leq cp\cdot R.
    \end{equation}
    Since $fp\cdot\ell\geq0$,
    \begin{equation*}
        \frac{p\cdot R}{1+F}\leq cp\cdot R-fp\cdot\ell\leq cp\cdot R.
    \end{equation*}
    Hence,
    \begin{equation*}
        1+F\geq\frac{1}{c}>1\implies F>0,
    \end{equation*}
    contradicting \Cref{eq:C1}.
\end{proof}
\begin{remark}
    To reach the contradiction, we used the strongest reading of \Cref{eq:C1} from \citet{chitra}. If we abondon \Cref{eq:C1}, we can still reach the desired contradiction if we assume in addition that $F=0$ for some $p$ and $R$. Note that this is a very mild economical assumption. If $F>0$ always, it means the value of an LP's position is always diluted the moment he enters the pool, which is absurd.
\end{remark}
    \section{Conclusion} \label{sec:conclusion}
In this article, we refuted the claim in \citet{chitra} that the TWM reduces the volatility of the pool, which they used to explain the popularity of protocols such as HLP and JLP. We further proved that no uniform reduction in volatility (delta) is possible under very mild economical assumptions. One further work is to waive the strong requirement that the reduction be uniform in $p$ and study in what price range does the reduction of volatility become possible.
    \printbibliography
\end{document}